\documentclass[10pt,twocolumn,twoside]{ieeeconf}
\IEEEoverridecommandlockouts    
\usepackage{amsthm}
\usepackage{amsmath}
\usepackage{amssymb}
\usepackage{mathtools}
\usepackage{dsfont}
\usepackage{hyperref}
\let\labelindent\relax
\usepackage{enumitem}
\usepackage{color}
\usepackage{graphicx}
\usepackage{multirow}
\usepackage{geometry}
\newtheorem{corollary}{\normalfont \textbf{Corollary}} 
\newtheorem{assumption}{\normalfont \textbf{Assumption}}
\newtheorem{lemma}{\normalfont \textbf{Lemma}}
\newtheorem{theorem}{\normalfont \textbf{Theorem}}
\newtheorem{proposition}{\normalfont \textbf{Proposition}}
\newtheorem{definition}{\normalfont \textbf{Definition}}
\newtheorem{remark}{\normalfont \textbf{Remark}}

\usepackage{caption}
\usepackage{subcaption}
\usepackage{cleveref}
\usepackage{cite}

\title{\textbf{Robust Safety-Critical Control of Networked SIR Dynamics}}
\author{Saba Samadi, Brooks A. Butler, and Philip E. Par\'e* \thanks{*Saba Samadi and Philip E.~Par\'e are with the Elmore Family School of Electrical and Computer Engineering at Purdue University. Brooks A. Butler is with the Department of Electrical Engineering and Computer Science at the University of California, Irvine. Emails:~ssamadi@purdue.edu, philpare@purdue.edu, bbutler2@uci.edu. This work was supported in part 
   by the National Science Foundation, grants
   NSF-ECCS \#2238388 and by an appointment to the Intelligence Community Postdoctoral Research Fellowship Program at the University of California, Irvine administered by Oak Ridge Institute for Science and Education (ORISE) through an interagency agreement between the U.S. Department of Energy and the Office of the Director of National Intelligence (ODNI).}}

\begin{document}

\maketitle

\begin{abstract}
We present a robust safety-critical control framework tailored for networked susceptible-infected-recovered (SIR) epidemic dynamics, leveraging control barrier functions (CBFs) and robust control barrier functions to address the challenges of epidemic spread and mitigation. In our networked SIR model, each node must keep its infection level below a critical threshold, despite dynamic interactions with neighboring nodes and inherent uncertainties in the epidemic parameters and measurement errors, to ensure public health safety. We first derive a CBF-based controller that guarantees infection thresholds are not exceeded in the nominal case. We enhance the framework to handle realistic epidemic scenarios under uncertainties by incorporating compensation terms that reinforce safety against uncertainties: an independent method with constant bounds for uniform uncertainty, and a novel approach that scales with the state to capture increased relative noise in early or suppressed outbreak stages. Simulation results on a networked SIR system illustrate that the nominal CBF controller maintains safety under low uncertainty, while the robust approaches provide formal safety guarantees under higher uncertainties; in particular, the novel method employs more conservative control efforts to provide larger safety margins, whereas the independent approach optimizes resource allocation by allowing infection levels to approach the boundaries in steady epidemic regimes.
\end{abstract}

\section{Introduction}
As global mobility increases and interconnectedness among major population centers grows, infectious diseases have the potential to spread rapidly across regions, heightening the risk of epidemic outbreaks. Consequently, developing sophisticated and explainable epidemic control strategies is a critical area of public health research \cite{anderson1992infectious,andersson2012stochastic,keeling2008modeling,sereno2022minimizing}, as decision-makers face significant complexity and uncertainty at each step of the epidemic estimation and mitigation process.  
In this work, we consider the control of a networked 
susceptible-infected-recovered (SIR) epidemic process \cite{pastor2015epidemic,pare2020modeling}, which models the spread of diseases where infected members of interconnected populations cannot be reinfected with the same disease after they have recovered from the infection. In such a system, we aim to provide insights into how decision-makers at each node (i.e., population center) can ensure that the number of infections does not exceed a specified limit while also accounting for the dynamic influence of neighboring nodes with varying levels of uncertainty. In this context, the application of control barrier functions (CBFs) \cite{ames2016control, ames2019control} can provide a useful tool for guaranteeing population safety, as defined by a barrier function. This is particularly relevant, since epidemic control often focuses on preventing unsafe infection levels that could overwhelm healthcare/treatment facilities. 

While some initial work exists that demonstrates the usefulness of CBFs towards enforcing safety conditions in the event of an epidemic outbreak \cite{ames2020safety,molnar2020safety,butler2023optimal}, the application of CBFs in dynamically coupled nonlinear networked systems is a challenging and open problem in the field of safety-critical control \cite{butler2024resilience,butler2023collaborative}. In the case of networked epidemics, we are especially interested in the incorporation of robustness into the decision-making for epidemic mitigation, where high degrees of uncertainty in the estimation of the true number of infected individuals, as well as uncertainty in the effectiveness of mitigation policies meant to reduce infection rates or increase recovery rates \cite{watkins2019robust,sharifi2017nonlinear,small2020modelling,ibeas2014robust,badfar2022design,bloem2009optimal,azimi2022state}. As a result of such uncertainty, decision-makers may choose to be either overly cautious (at a high economic cost) or too relaxed (at a high public health cost) in setting policies for mitigation efforts. In this regard, we leverage tools from robust safety-critical control \cite{alan2025generalizing}, which provide an explainable framework for selecting the least-costly action that still guarantees safety under uncertainty, and we extend this approach by developing a new low-prevalence–amplified compensation tailored for epidemics, designed to preserve safety when early-stage case counts are difficult to measure.

Contrasting with previous work, we develop a framework for safety-critical control of networked SIR epidemic processes with the following contributions:
\begin{enumerate}
    \item We provide a closed-form, node-level CBF controller that ensures the safety of each node in the SIR epidemic process with respect to the safety conditions of neighboring nodes.
    
    \item We extend this framework to a robust CBF
(RCBF) formulation tailored to epidemic control that addresses uncertainty while preserving formal safety guarantees.

    \item To improve robustness in the early stages of an epidemic, we propose a novel uncertainty model that captures amplified noise at low infection levels to design a new RCBF framework. 
\end{enumerate}

The remainder of this paper is structured as follows. The dynamics of the networked SIR model are introduced in Section~\ref{sec2}. The problem formulation is presented in Section~\ref{sec3}. Section~\ref{sec4} examines the nominal CBF design. Section~\ref{sec: robust} extends the framework to robust CBFs by incorporating uncertainty through compensation terms. Section~\ref{secres} illustrates the effectiveness of the proposed controllers with simulations on an epidemic network. Finally, Section~\ref{secfin} concludes the paper and discusses future research
directions.

\section{Networked SIR Model Dynamics}\label{sec2}
	
	In this section, we introduce the dynamics of the networked SIR model as follows:	
	\begin{subequations}\label{eq:SIR}
    \begin{align}
	\dot{x}_i &= -(\gamma_i + u_i) x_i + s_i \sum_{j=1}^n \beta_{ij} x_j \label{eq:SIR_x},\\
	\dot{r}_i &= (\gamma_i + u_i) x_i \label{eq:SIR_r},
    \end{align}
    \end{subequations}	
	where \(n\) represents the number of nodes in the network; \( x_i, r_i \in [0,1] \) are the fractions of infected and recovered individuals at node \( i \), respectively; \( s_i = 1 - x_i - r_i \) is the fraction of susceptible individuals at node \( i, \) which ensures that the total population at node \(i\) remains constant/well-defined; the recovery (removed) rate of each node is represented as \( \gamma_i > 0 \); \( \beta_{ij} \geq 0 \) is the infection rate from node \( j \) to node \( i \); and the control input at node \( i \) is denoted as \( u_i \in U_i = [0, \bar{u}_i] \subseteq \mathbb{R}_{\geq 0} \), where \(\bar{u}_i\) represents the maximum available control effort. The control input \(u_i\) is interpreted as an intervention effort aimed at increasing the healing rate at node \(i\); such boosting can be achieved through measures like enhanced medical treatment, enforced quarantine, rapid testing and isolation, or accelerated vaccination programs.
    The coupling between nodes, through the term \(\sum_{j=1}^n \beta_{ij} x_j\), reflects the influence of neighboring populations on the infection dynamics of node \(i\). 
    To express the system in a control-affine form, the dynamics can be rewritten as
    \begin{subequations}
        \begin{align}
        \begin{pmatrix}\dot{x}_i \\ \dot{r}_i \end{pmatrix} &= f_i(\mathbf{x}_i,r_i) + g_i(x_i) u_i, \nonumber \\
	f_i(\mathbf{x}_i,r_i) &= \begin{pmatrix}
		-\gamma_i x_i + (1 - x_i - r_i)\sum_{j=1}^n \beta_{ij} x_j \\[6pt]
		\gamma_i x_i
	\end{pmatrix}, \\
	g_i(x_i) 
        &= \begin{pmatrix} -x_i \\ x_i \end{pmatrix},
        \end{align}
        \end{subequations}
where \(\mathbf{x}_i = (x_i, x_{\mathcal{N}_i^+})\) denotes the one-hop neighborhood state vectors, \(\mathcal{N}_i^+\) is the set of nodes that directly influence node \(i\), and \(x_{\mathcal{N}_i^+}\) represents the fraction of infected individuals among these one-hop neighbors. 

\section{Problem Statement} \label{sec3}
	
To ensure that the infection level at each node remains below a critical threshold, we introduce a safety constraint. We define the safety constraint for each node as:
	\begin{equation}
        \label{eq.hi}
		h_i(x_i) = \bar{x}_i - x_i, 
	\end{equation}	
	where \( \bar{x}_i \in (0,1] \) is the maximum allowable infection level at node \( i \). We denote the time derivative of the safety function \(h_i(x_i)\) as \(\dot{h}_i(\mathbf{x}_i, r_i,u_i)\), which depends on both \(x_i\) and \(r_i\). The safety set for node \( i \) is	
	\begin{equation}
    \label{safeset}
		C_i = \{ x_i \in [0,1] : h_i(x_i) \geq 0 \}, 
	\end{equation}	
    which requires \(x_i(t) \leq \bar{x}_i\) for all time, in order for node $i$ to be considered safe. The safety condition may be interpreted as the maximum capacity of a node to treat infected individuals as constrained by limited medical resources.
\begin{remark}\label{rem:vanishing}
At the disease‐free equilibrium \(x_i=0\), the control input \( u_i \) has no effect on the dynamics, as \( g_i(x_i) = 0 \).
As $x_i$ approaches 0 from above, the system gradually loses controllability, with the magnitude of $g_i(x_i)$ diminishing proportionally to $x_i$, reducing the control authority.
However, since \( h_i(x_i) = \bar{x}_i > 0~\eqref{eq.hi},\)
the state is already inside the safe set and no intervention is required \cite{xi0}. Therefore, in the remainder of this paper, we focus on the non-trivial regime \(x_i \in (0,1]\).
\end{remark}
\begin{definition}
\label{dfn:safe}
A node \(i\) is said to be \emph{safe} if there exists an admissible control input \(u_i \in U_i\) such that the corresponding state trajectory \(x_i(t)\) remains in the safe set in \eqref{safeset} for all \(t \geq 0\). Equivalently, node \(i\) is safe if the control input can ensure that

\vspace{-3ex}

\begin{equation}
\label{safety}
    \dot{h}_i(\mathbf{x}_i, r_i, u_i) \geq -\alpha_i(h_i(x_i)),
\end{equation}
where \( \alpha_i \) is an extended class-\( \mathcal{K} \) function \cite{ames2016control}.
\end{definition} 

\begin{definition} \label{vulnerable}
For each node~$i$, define $\mathcal{V}_i:= \{\mathbf{x}_i :  h_j(x_j)\geq 0, \forall j\in \mathcal{N}_i^+\},$ the set of one-hop neighborhood states in which all neighbors of~$i$ satisfy their safety constraints.
We say that node~$i$ is \emph{vulnerable} if there exists a state $\mathbf{x}_i\in \mathcal{V}_i$ such that there is no input $u_i \in U_i$ that can satisfy \eqref{safety}.
\end{definition}

\textbf{Problem Formulation:}
Consider a network of interconnected populations modeled by the SIR dynamics in \eqref{eq:SIR}. Each node~$i$ must keep its infection level \(x_i\) below a safety threshold \(\bar{x}_i\) using a local control input \(u_i \in U_i\). The safety of node \(i\) depends on both its own state and the infection levels of its neighbors due to the non-linear infection term.
Given the infection tolerances (predefined safety boundaries) of node~$i$’s neighbors, we aim to determine the minimum required control effort $\bar{u}_i$ at node $i$. The goal is to ensure that the node is \emph{not vulnerable} as defined in Definition~\ref{vulnerable}. In particular, the control effort must guarantee that the time derivative of $h_i(x_i)$ is non-negative at the safety boundary $x_i = \bar{x}_i$. This requirement must hold even when the infection levels of neighboring nodes are at their maximum values, which exert the strongest negative influence on node~$i$’s safety. Addressing this worst-case scenario motivates the need for a robust mitigation strategy.

\section{Nominal Safety‑Critical Control} \label{sec4}
To maintain safety, we adopt a CBF framework \cite{ames2016control,ames2019control}. In this section, we first derive the CBF conditions that characterize the admissible control inputs, and then analyze node-level vulnerability. Following Remark~\ref{rem:vanishing}, we analyze the CBF condition on \(x_i\in(0,\bar{x}_i]\).

\subsection{CBF Derivation}
    We utilize the Lie derivative to calculate the first derivative of \( h_i(x_i) \) as follows
	\begin{equation}
		\dot{h}_i(\mathbf{x}_i, r_i, u_i) =  \mathcal{L}_{f_i} h_i(\mathbf{x}_i) + \mathcal{L}_{g_i} h_i(x_i) u_i,
	\end{equation}
    where the Lie derivative of a smooth function \(h: \mathbb{R}^n\to\mathbb{R}\) along a vector field \(f:\mathbb{R}^n\to\mathbb{R}^n\) at \(z\) is defined as 
    \[\mathcal{L}_f h(z) \triangleq \frac{\partial h(z)}{\partial z} f(z).\]
The calculation of \(\mathcal{L}_{f_i} h_i(\mathbf{x}_i)\) and \(\mathcal{L}_{g_i} h_i(x_i)\) referring to \eqref{eq.hi} is given by:
	\[
	\frac{\partial h_i}{\partial x_i} = -1, \quad \frac{\partial h_i}{\partial r_i} = 0.
	\]	
	Therefore:
	\begin{align}
	\mathcal{L}_{f_i} h_i &= \gamma_i x_i - (1 - x_i - r_i)\sum_j \beta_{ij} x_j, \nonumber \\
		\mathcal{L}_{g_i}h_i &= x_i, \text{and}\nonumber \\
	    \dot{h}_i(\mathbf{x}_i, r_i, u_i) &= (\gamma_i + u_i) x_i - (1 - x_i - r_i)\sum_{j=1}^n \beta_{ij} x_j.  \nonumber
        \end{align}

We aim to find the minimum required \( u_i \in U_i = [0, \bar{u}_i]\), such that \(u_i\) satisfies \eqref{safety}; therefore, the control input must be chosen as		
	\[
	(\gamma_i + u_i) x_i - (1 - x_i - r_i) \sum_{j=1}^n \beta_{ij} x_j + \alpha_i(\bar{x}_i - x_i) \geq 0,
	\]	
\begin{equation}
\label{safetycondition}
u_i \geq \frac{(1 - x_i - r_i) \sum_{j=1}^n \beta_{ij} x_j - \gamma_i x_i - \alpha_i(\bar{x}_i - x_i)}{x_i},
\end{equation}

\vspace{-3ex}

\small
\begin{align}
&u_i = \min\Biggl( \bar{u}_i,\, \label{finalu}\\
&\max\Bigl( 0,\, \frac{(1 - x_i - r_i)\sum_{j=1}^n \beta_{ij} x_j - \gamma_i x_i - \alpha_i(\bar{x}_i - x_i)}{x_i} \Bigr) \Biggr).\nonumber
\end{align}  

\normalsize

\noindent

Inequality $u_i \le \bar{u}_i$ is necessary for the control input $u_i$ to be feasible, which guarantees $u_i \in U_i$. Interpreting $u_i$ as the effort needed to enforce safety allows us to assess whether the resources capped by $\bar{u}_i$ are sufficient to keep the node safe. When the required effort exceeds $\bar{u}_i$, the node is vulnerable as defined in Section~\ref{sub.vul}, consistent with the bound in Corollary~\ref{cor:feas}; this perspective motivates the vulnerability analysis that follows.


\subsection{Node Vulnerability Analysis}\label{sub.vul}

We investigate the conditions under which a node becomes vulnerable by adopting a boundary-based approach. Specifically, we focus on the scenario when the node’s infection level reaches its upper threshold, i.e., when \( x_i = \bar{x}_i \) (or equivalently, \( h_i(x_i) = 0 \)~\eqref{eq.hi}). Therefore, at the boundary, ensuring safety requires that the time derivative \( \dot{h}_i(\mathbf{x}_i,\bar{x}_i,r_i, u_i) \) remains nonnegative,
	\begin{equation}
		\dot{h}_i(\mathbf{x}_i, \bar{x}_i, r_i, u_i) \geq - \alpha_i(0) = 0. \nonumber
	\end{equation}	
where \(\dot{h}_i(\mathbf{x}_i,\bar{x}_i,r_i,u_i)\) means evaluating \(\dot{h}_i(\mathbf{x}_i,r_i,u_i)\) at the boundary \(x_i=\bar{x}_i\).
	
\begin{definition} \label{viability}
For each node~\(i\), define 
\[
\mathcal{S}_i := \Bigl\{ \mathbf{x}_i : h_i(x_i)\geq 0 \text{ and } h_j(x_j)\geq 0, \ \forall j\in \mathcal{N}_i^+ \Bigr\},
\]
which is the set of one-hop neighborhood states in which node \(i\) and all its directly influencing neighbors are safe. We say that \(\mathcal{S}_i\) is a \emph{node-level viability domain} if, for every \(\mathbf{x}_i \in \mathcal{S}_i\), there exists an admissible control input \(u_i \in U_i\) that satisfies the safety condition \eqref{safety}, i.e., the set is forward invariant.
We denote \( \partial S_{i}\) as the set of points in the viability domain \(\mathcal{S}_i\) where \( h_i(x_i) = 0\) and \( h_j(x_j) = 0\).
\end{definition}

\begin{definition}
\label{NCBF SIR}
    Given the networked SIR dynamics in \eqref{eq:SIR} and the safety constraint \eqref{eq.hi}, a function \( h_i(x_i) \) is called a \emph{node-level control barrier function (NCBF)} for node \( i \) if, for all \( \mathbf{x}_i\) on the boundary of the node-level viability domain \( \partial S_{i}\), there exists a control input \(u_i\) such that \eqref{safety} holds.
    
\end{definition}
 NCBF ensures that the infected population of node~\( i \) remains below the threshold \( \bar{x}_i \) by appropriately choosing~\( u_i \), while considering the influence of its neighbors.

\begin{assumption}
\label{assum:u_bar}
For each node \(i\), the maximum control effort \(\bar{u}_i\) is sufficiently large so that for every infection level \(x_i \in (0,\bar{x}_i]\) and for all admissible recovery values \( r_i \in [0, 1 - x_i] \), and all possible infection levels of neighboring nodes $x_j \in (0,\bar{x}_j] $, there exists at least one control input \(u_i \in U_i\) that can satisfy the safety condition in 
\eqref{safetycondition}.
\end{assumption}
Having established that each node has a sufficiently large control input to meet its local safety requirement, we consider conditions to guarantee that safety is maintained throughout the network.
In particular, we require each node’s neighbors also to possess suitable CBFs, ensuring that the local safety analysis scales effectively to the entire network.
\begin{assumption}
\label{existanceNCBF}
Let $n_j = 1 + |\mathcal{N}_j^+| $, where $ \mathcal{N}_j^+ $ denotes the set of nodes that directly influence node $ j $. For every neighbor \(j\in \mathcal{N}_i^+\), there exists a node-level control barrier function \(h_j: \mathbb{R}^{n_j} \rightarrow \mathbb{R}\) with a corresponding viability domain \(S_j\), as defined in Definition \ref{viability}.   
\end{assumption}
Note that Assumption~\ref{existanceNCBF} is a best-case scenario for neighbors’ capabilities; we address uncertainties affecting neighbors in the next section.
With these assumptions, we can now formally state the conditions under which the safety set for each node is forward invariant. 
\begin{theorem}
    \label{thm:forwardInvariance}
    \textbf{Forward Invariance via NCBF for SIR Model} \\
    Given a networked dynamic SIR system defined by \eqref{eq:SIR} and under Assumptions~\ref{assum:u_bar} and~\ref{existanceNCBF}, the set \( S_i \subseteq C_i \) is forward invariant if and only if \( h_i \) is an NCBF.
\end{theorem}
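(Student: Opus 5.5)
We prove the two directions separately, using Nagumo's theorem for the invariance and the closed-form controller \eqref{finalu} to realize it.

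\emph{Sufficiency ($h_i$ an NCBF $\Rightarrow$ $S_i$ forward invariant).} The set $S_i$ is defined by the scalar inequalities $h_i(x_i)\ge 0$ and $h_j(x_j)\ge 0$ for $j\in\mathcal{N}_i^+$.

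By Assumption~\ref{existanceNCBF}, each neighbor $j$ has an NCBF with viability domain $S_j$. Hence the constraints $h_j(x_j)\ge 0$ are preserved by the neighbors' own controllers. It therefore suffices to show that the component $x_i$ cannot cross $\bar{x}_i$ while $\mathbf{x}_i$ stays in $S_i$.

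Apply the controller \eqref{finalu}. By Assumption~\ref{assum:u_bar}, for every $x_i\in(0,\bar{x}_i]$, $r_i\in[0,1-x_i]$ and $x_j\in(0,\bar{x}_j]$, the unsaturated value in \eqref{safetycondition} does not exceed $\bar{u}_i$. So the saturation in \eqref{finalu} is inactive on the binding side, and $\dot{h}_i\ge-\alpha_i(h_i)$ holds on all of $S_i$ (the case $x_i=0$ is handled by Remark~\ref{rem:vanishing}).

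The closed loop $\dot{x}_i$ is locally Lipschitz on $x_i>0$, since \eqref{finalu} is a composition of $\min$, $\max$ and smooth functions divided by $x_i$. Near $x_i\to 0$, the product $u_ix_i$ remains Lipschitz because the $x_i$ cancels. Hence solutions exist and are unique.

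The standard comparison lemma then applies: from $\dot h_i\ge -\alpha_i(h_i)$ and $h_i(0)\ge 0$ we get $h_i(x_i(t))\ge 0$ for all $t\ge 0$. Equivalently, by Nagumo's theorem, on $\partial S_i$ the NCBF property gives $\dot h_i\ge -\alpha_i(0)=0$, so the vector field points into the tangent cone of $S_i$. Combined with the neighbors' invariance, this gives forward invariance of $S_i$. The inclusion $S_i\subseteq C_i$ is immediate from the definition.

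\emph{Necessity ($S_i$ forward invariant $\Rightarrow$ $h_i$ an NCBF).} Argue by contraposition. Suppose that at some point of $\partial S_i$ (where $h_i=0$) no $u_i\in U_i$ gives $\dot{h}_i\ge 0$. Then $\dot h_i<0$ there for every admissible input, so $h_i$ becomes negative on an arbitrarily short time interval. The trajectory starting at that point therefore leaves $C_i\supseteq S_i$, contradicting forward invariance. This is Nagumo's necessity direction. It uses that $\partial h_i/\partial x_i=-1\neq 0$, so $h_i$ is a regular boundary function and $\partial S_i$ is a genuine boundary.

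\emph{Main obstacle.} The delicate point is the coupling: $\partial S_i$ includes points where neighbors are simultaneously on their own boundaries. Invariance of $S_i$ thus relies on the neighbors' NCBFs (Assumption~\ref{existanceNCBF}) holding simultaneously along the same trajectory. I would handle this by treating the neighbor states as exogenous signals confined to $\{x_j\le\bar{x}_j\}$. Since Assumption~\ref{assum:u_bar} is uniform over this range, it yields the worst-case bound, and the node-$i$ argument goes through for any such neighbor trajectory. A secondary technical point is the loss of control authority as $x_i\to 0$, which I would dispose of via Remark~\ref{rem:vanishing}: there $h_i=\bar{x}_i>0$, so the state is in the interior and no constraint is active.
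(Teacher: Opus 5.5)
Your proposal is correct and follows essentially the same route as the paper. The NCBF-to-invariance direction is a Nagumo-type boundary argument in which Assumptions~\ref{assum:u_bar} and~\ref{existanceNCBF} supply an admissible $u_i$ with $\dot h_i\ge 0$ at $x_i=\bar x_i$ under worst-case (safe) neighbor states. The converse is the same contradiction at a boundary point where $\dot h_i<0$ for every $u_i\in U_i$.

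Your use of the explicit controller \eqref{finalu} with inactive saturation, the well-posedness check, and the comparison-lemma phrasing are refinements rather than a different approach; the paper instead plugs in $u_i=\bar u_i$ at the worst case, which is where the bound of Corollary~\ref{cor:feas} comes from. As in the paper, you take the neighbors' confinement to their safe sets from Assumption~\ref{existanceNCBF} rather than proving it jointly for the whole network, which in a network with cycles is a circular step in both arguments.
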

\begin{proof}
\textbf{Necessity:} 

 Since \( h_i \) is an NCBF, by Definition \ref{NCBF SIR}, for all \( \mathbf{x}_i \in \partial S_{i} \), there exists \( u_i \in U_i \) such that:
        \[
        \dot{h}_i(\mathbf{x}_i, r_i, u_i) \geq -\alpha_i(h_i(x_i)).
        \]
        Since \( h_i(x_i) = \bar{x}_i - x_i \), and at the boundary \(x_i = \bar{x}_i\), we have \( h_i(x_i) = 0 \):
        \[
        \dot{h}_i(\mathbf{x}_i, \bar{x}_i, r_i, u_i) \geq 0.
        \]    
To ensure that node~\(i\) remains safe even under the most adverse conditions, we consider each neighbor \(j \in \mathcal{N}_i^+\) is at its infection threshold (\(x_j = \bar{x}_j\)). Since $h_i$ is an NCBF, under this worst-case scenario, and given Assumptions~\ref{assum:u_bar} and~\ref{existanceNCBF}, there must exist a control input \( u_i \in U_i \) such that:
\[
\begin{aligned}
\dot{h}_i(\mathbf{x}_i, \bar{x}_i, r_i, u_i) &= (\gamma_i + u_i)\bar{x}_i \\
&- \left(1 - \bar{x}_i - r_i\right)\left(\beta_{ii}\bar{x}_i + \sum_{j \in \mathcal{N}_i^+} \beta_{ij}\bar{x}_j\right) \geq 0.
\end{aligned}
\]
Hence, there is a feasible \(u_i\) (possibly \(\bar{u}_i\)) maintaining \(\dot{h}_i\ge0\). Dividing the inequality by \(\bar{x}_i \) yields the following equivalent form.
By choosing \(u_i\) at its upper bound \(\bar{u}_i\), we obtain a conservative condition that guarantees \(\dot{h}_i\ge0\):
    \begin{align*}  
		\gamma_i + \bar{u}_i - (1 - \bar{x}_i - r_i) \beta_{ii} - (1 - \bar{x}_i - r_i) \sum_{j \in \mathcal{N}_i^+} \beta_{ij} \frac{\bar{x}_j}{\bar{x}_i} \geq 0,	
    \end{align*}    
which ensures that \( h_i(x_i(t)) \) does not decrease below zero, maintaining \( x_i(t) \leq \bar{x}_i \).
       Therefore, starting from any state \(x_i(0)\in S_i\), the system cannot exit \( S_i \).
    Thus, 
    $
    S_i \subseteq C_i$ is forward invariant.

    \textbf{Sufficiency:} 
    
 Given \( S_i \subseteq C_i \) is forward invariant. Assume, by way of contradiction, that \( h_i \) is not an NCBF.
 Therefore, there exists some \( \mathbf{x}_i \in \partial S_{i} \) such that, for all \( u_i \in U_i = [0, \bar{u}_i] \):
        \[
        \dot{h}_i(\mathbf{x}_i, r_i, u_i) < -\alpha_i(h_i(x_i)) = 0.
        \] 
If \( h_i \) is not an NCBF, then for the boundary condition \( x_i = \bar{x}_i \), \( h_i(x_i) = 0 \), and for all \( u_i \), the control condition is not met:
        \[
        u_i < \left(1 - \bar{x}_i - r_i\right)\left(\beta_{ii} + \sum_{j \in \mathcal{N}_i^+} \beta_{ij} \frac{\bar{x}_j}{\bar{x}_i}\right)  -\gamma_i,
        \]
        which implies: \[
        \dot{h}_i(\mathbf{x}_i, r_i, u_i) < - \alpha_i(\bar{x}_i - x_i).
        \]
        Therefore, \( h_i(x_i(t)) \) would decrease below zero, causing \( x_i(t) > \bar{x}_i \), which violates forward invariance. Thus,
    \(h_i\) must be an NCBF if \(S_i \subseteq C_i\) is forward invariant.
    \hfill \qedhere
\end{proof}
This result guarantees that, whenever a node has a valid NCBF, its safety set remains forward invariant. Theorem~\ref{thm:forwardInvariance} implies that, for \(h_i(x_i)\) to be an NCBF, a feasible control input \(u_i \in U_i\) must exist for all states within \(S_i\). We next derive a consequence relating to the necessary control input.
\begin{corollary}\label{cor:feas}
If the maximum control effort \(\bar{u}_i\) satisfies
\[
\bar{u}_i \geq (1 - \bar{x}_i - r_i^0) \beta_{ii} + (1 -  \bar{x}_i - r_i^0)\sum_{j=1}^n \beta_{ij}\frac{\bar{x}_j}{\bar{x}_i} - \gamma_i,
\]
where \( r_i^0 = r_i(0) \), there will always exist an admissible control input \(u_i \in U_i\) that guarantees node $i$ is safe.
\end{corollary}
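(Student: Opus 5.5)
The plan is to reduce the corollary to the boundary condition of Theorem~\ref{thm:forwardInvariance}. I will show that the stated lower bound on $\bar{u}_i$ makes $h_i$ an NCBF in the sense of Definition~\ref{NCBF SIR}. Forward invariance of $S_i$, and hence safety of node $i$ in the sense of Definition~\ref{dfn:safe}, then follows from the theorem.

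First, at the boundary $x_i=\bar{x}_i$, the CBF condition \eqref{safetycondition} with $\alpha_i(0)=0$ reads
\[
u_i \ge (1-\bar{x}_i-r_i)\Bigl(\beta_{ii}+\sum_{j\in\mathcal{N}_i^+}\beta_{ij}\tfrac{x_j}{\bar{x}_i}\Bigr)-\gamma_i .
\]
I will bound the right-hand side uniformly over the admissible states using two monotonicity facts.

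\emph{Neighbor states.} Since $\beta_{ij}\ge 0$ and $s_i=1-\bar{x}_i-r_i\ge 0$, the right-hand side is nondecreasing in each $x_j$. Over the set where every neighbor satisfies $x_j\le\bar{x}_j$ (guaranteed on $\mathcal{S}_i$ through Assumption~\ref{existanceNCBF}), the worst case is therefore $x_j=\bar{x}_j$.

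\emph{Recovered fraction.} By \eqref{eq:SIR_r}, $\dot r_i=(\gamma_i+u_i)x_i\ge 0$ whenever $u_i\ge 0$ and $x_i\ge 0$. Hence $r_i(t)\ge r_i^0$ for all $t\ge0$, and so $1-\bar{x}_i-r_i(t)\le 1-\bar{x}_i-r_i^0$. If this factor is negative the required input is negative, so $u_i=0$ already works.

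Combining the two facts, the required input is at most
\[
(1-\bar{x}_i-r_i^0)\beta_{ii}+(1-\bar{x}_i-r_i^0)\sum_{j}\beta_{ij}\bar{x}_j/\bar{x}_i-\gamma_i .
\]
The statement's sum runs over all $j=1,\dots,n$. Relative to the sum over $\mathcal{N}_i^+$, it adds the nonnegative term $(1-\bar{x}_i-r_i^0)\beta_{ii}$ and nonnegative terms $\beta_{ij}\bar{x}_j/\bar{x}_i$ for any $j\notin\mathcal{N}_i^+$ (for which $\beta_{ij}=0$ anyway). This only makes the bound more conservative, so it is harmless. Therefore $\bar{u}_i$ is at least this quantity, and $u_i=\bar{u}_i$ (or the clipped value \eqref{finalu}) is admissible and satisfies $\dot h_i\ge 0$ on $\partial S_i$. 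This is exactly the NCBF property.

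Finally, I invoke Theorem~\ref{thm:forwardInvariance}, with Assumption~\ref{assum:u_bar} covering interior points $x_i\in(0,\bar{x}_i)$, to conclude that $S_i$ is forward invariant. Starting from $x_i(0)\le\bar{x}_i$, the node therefore stays safe.

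The main obstacle is the time-dependence hidden in the bound. It is stated only at $t=0$ through $r_i^0$, so validity along the whole trajectory hinges on the monotonicity of $r_i$. That monotonicity requires the applied input to remain in $U_i\subseteq\mathbb{R}_{\ge0}$ and $x_i\ge0$; I would check the latter from the positivity of \eqref{eq:SIR_x}, since $\dot x_i\ge0$ at $x_i=0$. A secondary subtlety is that the worst-case neighbor argument relies on the neighbors' own safety (Assumption~\ref{existanceNCBF}), so the argument should be framed on $\mathcal{S}_i$ rather than on arbitrary neighbor states.
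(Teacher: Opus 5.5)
Your proposal is correct and follows the paper's own brief justification: take the worst case from the proof of Theorem~\ref{thm:forwardInvariance}, with all neighbors at their thresholds, the recovered fraction at its smallest value $r_i^0$, and $u_i=\bar u_i$ at $x_i=\bar x_i$. Your monotonicity argument for $r_i$ via \eqref{eq:SIR_r}, and your observation that the sum over all $j$ double-counts $\beta_{ii}$ and is therefore only more conservative, fill in steps the paper leaves implicit. One minor point: the boundary inequality already gives invariance by Nagumo's theorem, as in the paper's proof of Theorem~\ref{thm:forwardInvariance}, so you do not need Assumption~\ref{assum:u_bar} for interior points---leaning on it would make the hypothesis on $\bar u_i$ redundant.
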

By Definition~\ref{dfn:safe} and the proof of Theorem~\ref{thm:forwardInvariance}, Corollary~\ref{cor:feas} considers the worst-case conditions for node~$i$'s safety where all nodes are at their infection threshold and the recovered state is as small as possible. If the initial condition of the recovered state is not known, $r_i^0$ can be replaced by zero, assuming those that are not currently infected are susceptible. 

Since real epidemics involve model and reporting uncertainty, we extend the framework with robust CBFs that add an uncertainty buffer to preserve these guarantees.

\section{Robust CBFs for the Networked SIR Model} \label{sec: robust}
In this section, we include uncertainty in our framework. We first implement the robustness framework developed in \cite{alan2025generalizing} on the networked SIR model. We begin with the general uncertain system
\begin{equation}
\label{dynamicuncertain}
\begin{pmatrix}\dot{x}_i \\ \dot{r}_i \end{pmatrix} = f_i(\mathbf{x}_i,r_i) + g_i(x_i) u_i + \mu_i(\mathbf{x}_i,u_i),   
\end{equation}
where, for each node \(i\), the uncertainty \(\mu_i(\mathbf{x}_i,u_i)\)
captures the unmodeled effects or disturbances. 
When uncertainty is present, evaluating the time derivative of \(h_i(x_i)\), with respect to the full dynamics in \eqref{dynamicuncertain}, yields
\[
\dot{h}_i(\mathbf{x}_i, r_i, u_i) =  \mathcal{L}_{f_i} h_i(\mathbf{x}_i) + \mathcal{L}_{g_i} h_i(x_i) u_i + 
\mathcal{L}_{\mu_i} h_i(\mathbf{x}_i, u_i),
\]
where
\[
\mathcal{L}_{\mu_i} h_i(\mathbf{x}_i, u_i) = \frac{\partial h_i}{\partial x_i}\mu_i(\mathbf{x}_i,u_i).
\]
Since \(h_i(x_i) = \bar{x}_i - x_i\) and \(\frac{\partial h_i}{\partial x_i}=-1\), the term \(\mathcal{L}_{\mu_i} h_i(\mathbf{x}_i, u_i)\) simplifies to
\[
\mathcal{L}_{\mu_i} h_i(\mathbf{x}_i, u_i) = -\mu_i(\mathbf{x}_i,u_i).
\]

The uncertainty prevents us from finding a control input \(u_i\) that satisfies \eqref{safety} for all possible \(\mu_i\). The robust CBF (RCBF) framework \cite{alan2025generalizing} addresses this issue by introducing a compensation term \(\sigma_i(\mathbf{x}_i,u_i)\). 
From Lemma 2 of \cite{alan2025generalizing}, a sufficient condition to guarantee robust safety, is that the compensation term satisfies
\begin{equation}
\label{lie_comp}
\mathcal{L}_{\mu_i} h_i(\mathbf{x}_i, u_i) + \sigma_i(\mathbf{x}_i,u_i) \ge 0.    
\end{equation}
We aim to find a set of controllers \(u_i \in \mathcal{K}^{RCBF}_i \subseteq U_i\) where 
\begin{align}
\label{generalrobustset}
 \mathcal{K}^{RCBF}_i = \Bigl\{ u_i \in U_i \;:\;\mathcal{L}_{f_i} h_i(\mathbf{x}_i) + \mathcal{L}_{g_i} h_i(x_i) u_i \nonumber\\
 - \sigma_i(\mathbf{x}_i,u_i) \ge -\alpha_i\bigl(h_i(x_i)\bigr), \quad \text{and} \nonumber\\
 \mathcal{L}_{\mu_i} h_i(\mathbf{x}_i, u_i) + \sigma_i(\mathbf{x}_i,u_i) \ge 0 \Bigr\} .  
\end{align}
In other words, any controller \(u_i\) chosen from the set \(\mathcal{K}^{RCBF}_i\) guarantees that the modified robust safety condition
\begin{equation}
\label{robustcondition}
\mathcal{L}_{f_i} h_i(\mathbf{x}_i) + \mathcal{L}_{g_i} h_i(x_i) u_i - \sigma_i(\mathbf{x}_i,u_i) \ge -\alpha_i\bigl(h_i(x_i)\bigr)    
\end{equation}
holds, while the inequality in \eqref{lie_comp} must also be satisfied.
\noindent
We now adapt the notion of robust safety from \cite{alan2025generalizing} to the epidemic setting.
\begin{definition}
\label{robust_safety}
A node \(i\) is \emph{robustly safe} if for every admissible realization of the uncertainty \(\mu_i(\mathbf{x}_i,u_i)\) and for every initial state \(x_i(0) \in C_i\), there exists a control input \(u_i \in U_i\) such that \eqref{robustcondition} holds.
\end{definition}
Uncertainty plays a central role in epidemic processes as case counts are often under-reported, and small outbreaks are difficult to measure accurately. As a result, nominal safety guarantees may not hold unless the uncertainty is addressed. To address this challenge, we design compensation terms within the RCBF framework that capture how uncertainty may appear in epidemic dynamics. We describe two compensation term~\((\sigma_i)\) choices.

\subsection{Uncertainty Bounded Independently of Control Input}
In a networked SIR model, each node may face reporting errors or random fluctuations, such as unpredictable variations in disease transmission due to behavioral changes or environmental factors, that do not depend on how aggressively a node reduces infection. For instance, a constant fraction of under-reported infections or seasonal variations might persist at each node irrespective of local quarantine rules. In this case, the uncertainty bound \(\bar{\mu}_i\) can be treated as a fixed offset in the CBF constraint, which leads to a more conservative but straightforward policy: every node’s controller always reserves enough of a safety margin to handle the worst-case offset. Although it may overestimate errors when infection levels are small, the independent approach ensures that each node remains robustly safe across varying network interactions without needing detailed assumptions on how control influences uncertainty.

Assume that the uncertainty is bounded above and there exists an upper bound \(\bar{\mu}_i \geq 0\) which satisfies
\begin{equation}
\label{func_independent}
\|\mu_i(\mathbf{x}_i,u_i)\| \le \bar{\mu}_i \quad \forall{ \mathbf{x}_i,u_i}.    
\end{equation}
A natural choice is to assume \(\sigma_i(\mathbf{x}_i) = \bar{\mu}_i\,\|\frac{\partial h_i}{\partial \mathbf{x}_i}\|\).
Since \(\|\frac{\partial h_i}{\partial x_i}\|=1\), we obtain 
\begin{equation}
\label{comp1}
 \sigma_i = \bar{\mu}_i.  
\end{equation}

\subsection{Low-Prevalence--Amplified Uncertainty} \label{subsec:lowprev}
Early in an outbreak (or after successful suppression), the infected fraction \(x_i\) at a node can be extremely small. In this regime, case counts are drawn from very limited samples and are prone to under-reporting, delayed confirmation, and high relative statistical noise. Surveillance systems are known to fail precisely in these low-prevalence scenarios, where algorithm-based detection performs poorly when there are a limited number of cases.

If reported infections follow Poisson counting statistics, the standard deviation of the count grows with the square root of the count. Under the Poisson assumption, the reported infection fraction~\((\hat{x}_i)\) has \(\mathrm{std}(\hat{x}_i)\propto \sqrt{\hat{x}_i}\) (absolute error). However, the coefficient of variation \(\mathrm{std}(\hat{x}_i)/\mathbb{E}[\hat{x}_i]\propto 1/\sqrt{\hat{x}_i}\) (where std denotes the standard deviation and $\mathbb{E}$ denotes the expectation), so the relative error explodes as \(\hat{x}_i \to 0\) \cite{Poisson}. This breakdown of Poisson approximations at low rates has been documented in disease surveillance models using Poisson-Kalman filtering, where the approximation breaks down when the rates of occurrences are much smaller \cite{Baike}.

To capture this phenomenon, we introduce an uncertainty whose magnitude increases as \(x_i \to 0\):
\begin{equation}
\label{func_lowprev}
\mu_i(\mathbf{x}_i,u_i)
= \frac{d_i(\mathbf{x}_i,u_i)}{\sqrt{(x_i+\delta)}}, 
\end{equation}
where \(0<\delta\ll1\) regularizes the expression near \(x_i=0\) and \(d_i(\mathbf{x}_i,u_i)\) represents the underlying uncertainty component. We assume \(\|d_i(\mathbf{x}_i,u_i)\|\le\bar d_i\). Consequently, the Lie derivative computation becomes
\[
\mathcal{L}_{\mu_i} h_i(\mathbf{x}_i,u_i)
= \frac{\partial h_i}{\partial x_i}\,\mu_i(\mathbf{x}_i,u_i)
= -\frac{d_i(\mathbf{x}_i,u_i)}{\sqrt{(x_i+\delta)}},
\]
so that in the worst-case
\[
\|\mathcal{L}_{\mu_i} h_i(\mathbf{x}_i,u_i)\|
\le \frac{\bar d_i}{\sqrt{(x_i+\delta)}},
\]
leading to the conservative compensation term
\begin{equation}
\label{ncomp}
\sigma_i({x}_i)
= \frac{\bar d_i}{\sqrt{(x_i+\delta)}}.
\end{equation}

\subsection{Robust Safety Synthesis and Theoretical Guarantees}
Depending on the type of uncertainty, the compensation terms \eqref{comp1} or \eqref{ncomp} can be chosen for node \(i\).
A controller \(u_i\) chosen from the robust safe set
\begin{align}
\label{robustset}
 \mathcal{K}^{RCBF}_i = \Bigl\{ u_i \in U_i \;&:\; \gamma_i x_i - \left(1-x_i-r_i\right)\sum_{j=1}^n \beta_{ij} x_j \nonumber\\
 &+ x_i\,u_i - \sigma_i(\mathbf{x}_i,u_i) \ge -\alpha_i(\bar{x}_i-x_i) \Bigr\} ,  
\end{align}
where the requirement, from \eqref{lie_comp},
\[
-\mu_i(\mathbf{x}_i,u_i) + \sigma_i(\mathbf{x}_i, u_i) \ge 0,
\]
ensures that the safety condition holds even in the worst-case realization of the uncertainty.

\begin{assumption}
\label{exitsigma}
Let there exist a compensation term \(\sigma_i(\mathbf{x}_i,u_i)\) such that the modified robust CBF condition in \eqref{robustcondition} holds, for all \((\mathbf{x}_i,u_i)\), and also satisfies \eqref{lie_comp}.  \end{assumption}
Under this assumption on the compensation term, we can state a robust counterpart to our earlier forward invariance result, ensuring safety despite unmodeled disturbances.
\begin{lemma}[Forward Invariance with Robust CBF]
\label{lem:RCBF_forward_invariance}
Consider the uncertain dynamics for node \(i\) in \eqref{dynamicuncertain}, and define the safety constraint \(h_i(x_i)\) in \eqref{safety} with the safe set \(C_i\) from \eqref{safeset}.
Under Assumption \ref{exitsigma}, for any control input \(u_i\) satisfying \eqref{robustcondition}, the safe set \(C_i\) is forward invariant.
\end{lemma}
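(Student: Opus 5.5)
The plan is to reduce the statement to the nominal CBF forward-invariance argument. First show that the true derivative of $h_i$ along the uncertain dynamics \eqref{dynamicuncertain} satisfies the CBF inequality \eqref{safety}. Then invoke a comparison/Nagumo-type argument on the boundary $h_i=0$, as in the proof of Theorem~\ref{thm:forwardInvariance}.

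\textbf{Step 1 (derivative bound).} Fix any admissible realization of $\mu_i$, and let $u_i$ satisfy \eqref{robustcondition}. Along trajectories of \eqref{dynamicuncertain}, decompose the derivative as
\begin{align*}
\dot h_i &= \mathcal{L}_{f_i}h_i+\mathcal{L}_{g_i}h_i\,u_i+\mathcal{L}_{\mu_i}h_i\\
&= \bigl(\mathcal{L}_{f_i}h_i+\mathcal{L}_{g_i}h_i\,u_i-\sigma_i\bigr)+\bigl(\mathcal{L}_{\mu_i}h_i+\sigma_i\bigr).
\end{align*}
By \eqref{robustcondition}, the first bracket is at least $-\alpha_i(h_i(x_i))$. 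By Assumption~\ref{exitsigma}, \eqref{lie_comp} holds, so the second bracket is nonnegative. Together these give $\dot h_i\ge-\alpha_i(h_i(x_i))$ for every realization. For the two concrete choices, the required nonnegativity $-\mu_i+\sigma_i\ge0$ follows directly from the bounds: $\sigma_i=\bar\mu_i\ge|\mu_i|$ in \eqref{comp1}, and $\sigma_i=\bar d_i/\sqrt{x_i+\delta}\ge|d_i|/\sqrt{x_i+\delta}$ in \eqref{ncomp}.

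\textbf{Step 2 (invariance).} Set $\eta(t)=h_i(x_i(t))$, so that $\dot\eta\ge-\alpha_i(\eta)$. Compare $\eta$ with the solution $y$ of $\dot y=-\alpha_i(y)$, $y(0)=\eta(0)\ge0$. Because $\alpha_i$ is extended class-$\mathcal K$, we have $\alpha_i(0)=0$, so $y\equiv0$ is an equilibrium and solutions starting at $y(0)\ge0$ remain nonnegative. The comparison lemma then gives $\eta(t)\ge y(t)\ge0$ for all $t$ in the interval of existence.

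If $\alpha_i$ is only continuous and not locally Lipschitz, I would fall back on a direct argument. Suppose $\eta(t_1)<0$ for some $t_1$, and let $t_0=\sup\{t<t_1:\eta(t)\ge0\}$. Then $\eta(t_0)=0$ and $\eta<0$ on $(t_0,t_1]$. On that interval $-\alpha_i(\eta)>0$, so $\dot\eta>0$, which contradicts $\eta$ decreasing from $0$ to a negative value. Hence $x_i(t)\le\bar x_i$, i.e., $C_i$ is forward invariant.

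\textbf{Main obstacle.} The key difficulty is the well-posedness of the closed loop. We need $u_i$ (for example, the clipped feedback analogous to \eqref{finalu} with $\sigma_i$ subtracted) to be measurable and locally Lipschitz enough that solutions exist. We also need the state to stay in $x_i\in(0,1]$, where $\sigma_i$ and the feedback are well defined; in the regime of Remark~\ref{rem:vanishing} the case $x_i=0$ is trivially safe. I would state that trajectories exist and are absolutely continuous, so the inequality above holds almost everywhere. Since the comparison argument only needs an a.e.\ inequality, this suffices.
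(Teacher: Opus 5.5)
Your proposal is correct and follows the paper's argument: the same add-and-subtract of $\sigma_i$ splits $\dot h_i$ into a bracket bounded below by $-\alpha_i(h_i(x_i))$ via \eqref{robustcondition} and a bracket that is nonnegative via \eqref{lie_comp}, giving $\dot h_i\ge-\alpha_i(h_i(x_i))$. The only difference is that you prove the final invariance step yourself, via the comparison lemma or your first-exit-time contradiction (which needs no Lipschitz assumption on $\alpha_i$), whereas the paper simply invokes the Nagumo-type theorem of \cite{alan2025generalizing}.
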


\begin{proof}
Since \(h_i(x_i)=\bar{x}_i-x_i\), the Lie derivative of \(h_i\) with respect to the uncertainty vector field \(\mu_i\) is 
\[
\mathcal{L}_{\mu_i} h_i(\mathbf{x}_i,u_i) = \frac{\partial h_i}{\partial x_i} \mu_i(\mathbf{x}_i,u_i) = -\mu_i(\mathbf{x}_i,u_i).
\]
Thus, \eqref{lie_comp} can be written as
\[
-\mu_i(\mathbf{x}_i,u_i) + \sigma_i(\mathbf{x}_i,u_i) \ge 0,
\]
which implies that the worst-case effect of the uncertainty is compensated by \(\sigma_i\).
Recalling the time derivative of \(h_i(x_i)\),
\[
\dot{h}_i(\mathbf{x}_i, r_i, u_i) = \mathcal{L}_{f_i} h_i(\mathbf{x}_i, r_i) + \mathcal{L}_{g_i} h_i(x_i) u_i + \mathcal{L}_{\mu_i} h_i(\mathbf{x}_i,u_i).
\]
By adding and subtracting \(\sigma_i(\mathbf{x}_i,u_i)\), we obtain
\begin{align}
\dot{h}_i(\mathbf{x}_i, r_i, u_i) &= \Bigl[\mathcal{L}_{f_i} h_i(\mathbf{x}_i, r_i) + \mathcal{L}_{g_i} h_i(x_i) u_i - \sigma_i(\mathbf{x}_i,u_i)\Bigr] \nonumber\\ 
&+ \Bigl[\mathcal{L}_{\mu_i} h_i(\mathbf{x}_i,u_i) + \sigma_i(\mathbf{x}_i,u_i)\Bigr]. \nonumber    
\end{align}
By \eqref{robustcondition}, the terms in the first bracket are bounded below by \(-\alpha_i\bigl(h_i(x_i)\bigr)\) and, by \eqref{lie_comp}, the second bracket is nonnegative. Hence,
\[
\dot{h}_i(\mathbf{x}_i, r_i, u_i) \ge -\alpha_i\bigl(h_i(x_i)\bigr).
\]
By Theorem 1 (Nagumo's theorem) of \cite{alan2025generalizing}, it is well-known that if \(\dot{h}_i(\mathbf{x}_i, r_i, u_i) \ge -\alpha_i\bigl(h_i(x_i)\bigr)\) for an extended class-\( \mathcal{K} \) function \(\alpha_i\), then the set \(C_i = \{ x_i \in [0,1] : h_i(x_i) \geq 0 \}\) is forward invariant.
\hfill \qedhere
\end{proof}
To formally guarantee the existence of a robustly safe controller, we impose the following assumption and subsequently provide the corresponding theoretical result.
\begin{assumption} 
\label{assumption:RCBF_existence}
For node \(i\) with dynamics defined in \eqref{dynamicuncertain} and safety function \(h_i(x_i)\) defined in \eqref{safety}, assume that there exists a compensation term \(\sigma_i(\mathbf{x}_i,u_i)\) selected according to \eqref{comp1} or \eqref{ncomp}, and that for every \(x_i\) in the viability domain, the set \(\mathcal{K}^{RCBF}_i\), defined in \eqref{robustset}, is non-empty. Furthermore, there exists a class-\(\mathcal{K}\) function \(\alpha\) such that the set \(\mathcal{K}^{RCBF}_i\) remains non-empty for all \(x_i \in (0, \bar{x}_i]\).
\end{assumption}

\begin{proposition}[Existence of Robust Safe Controller]
\label{thm:existence_RCBF_controller}
Under Assumption~\ref{assumption:RCBF_existence}, for any input \(u^*_i \in \mathcal{K}^{RCBF}_i\), the closed-loop dynamics remain robustly safe, meaning that the safe set \(C_i\) defined in \eqref{safeset} is forward invariant.
\end{proposition}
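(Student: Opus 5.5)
The proof reduces the proposition to Lemma~\ref{lem:RCBF_forward_invariance}: any $u_i^*\in\mathcal{K}^{RCBF}_i$ satisfies \eqref{robustcondition}, and the chosen compensation term satisfies \eqref{lie_comp}, so the hypotheses of that lemma hold. The argument has three steps.

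\textbf{Step 1 (well-posedness).} By Assumption~\ref{assumption:RCBF_existence}, $\mathcal{K}^{RCBF}_i$ in \eqref{robustset} is non-empty for every $x_i\in(0,\bar{x}_i]$. Hence a selection $u_i^*$ exists along any trajectory starting in $C_i$. By Remark~\ref{rem:vanishing}, the case $x_i=0$ is trivially safe and needs no input. Membership in \eqref{robustset} is, after substituting $\mathcal{L}_{f_i}h_i$ and $\mathcal{L}_{g_i}h_i=x_i$ from Section~\ref{sec4}, exactly inequality \eqref{robustcondition}.

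\textbf{Step 2 (compensation dominates the uncertainty).} Check \eqref{lie_comp}, i.e.\ $-\mu_i+\sigma_i\ge 0$, for each admissible choice.
\begin{itemize}
\item For \eqref{comp1}, the bound \eqref{func_independent} gives $-\mu_i\ge -\|\mu_i\|\ge-\bar{\mu}_i$, so $-\mu_i+\sigma_i\ge 0$.
\item For \eqref{ncomp}, with the model \eqref{func_lowprev} and $\|d_i\|\le\bar d_i$, we get $-\mu_i\ge -\bar d_i/\sqrt{x_i+\delta}=-\sigma_i(x_i)$. Here $\delta>0$ keeps $\sigma_i$ finite and continuous on $[0,\bar{x}_i]$.
\end{itemize}
Thus Assumption~\ref{exitsigma} holds along the closed loop for every admissible realization of $\mu_i$.

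\textbf{Step 3 (conclusion).} Apply Lemma~\ref{lem:RCBF_forward_invariance}. Split $\dot h_i$ into the bracket bounded below by $-\alpha_i(h_i)$ and the nonnegative bracket from Step~2, so $\dot h_i\ge-\alpha_i(h_i(x_i))$ for every uncertainty realization. Nagumo's theorem (Theorem 1 of \cite{alan2025generalizing}) then yields forward invariance of $C_i$. Since this holds for every admissible $\mu_i$ and every $x_i(0)\in C_i$, node $i$ is robustly safe in the sense of Definition~\ref{robust_safety}.

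\textbf{Main obstacle.} The delicate point is regularity of the closed loop: Nagumo's argument requires solutions to exist, and an arbitrary selection $u_i^*$ from $\mathcal{K}^{RCBF}_i$ need not be Lipschitz. I would handle this by taking the min-norm (clipped, closed-form) selection analogous to \eqref{finalu} with $\sigma_i$ added to the numerator. This selection is locally Lipschitz on $(0,\bar{x}_i]$ because $x_i>0$, $\delta>0$, and $\alpha_i$ can be taken locally Lipschitz. For a general $u_i^*$, I would instead state the result for whatever closed-loop solutions exist, which is the level of rigor used in Theorem~\ref{thm:forwardInvariance}.
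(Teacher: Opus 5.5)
Your proposal is correct and follows the same route as the paper: Assumption~\ref{assumption:RCBF_existence} makes $\mathcal{K}^{RCBF}_i$ nonempty, so \eqref{robustcondition} holds for $u_i^*$; the compensation term gives \eqref{lie_comp}; and Lemma~\ref{lem:RCBF_forward_invariance} (via Nagumo) yields forward invariance of $C_i$. Your explicit check of \eqref{lie_comp} for \eqref{comp1} and \eqref{ncomp} spells out what the paper leaves as ``by construction,'' and your remark on closed-loop well-posedness goes beyond what the paper addresses.
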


\begin{proof}
By Assumption~\ref{assumption:RCBF_existence}, for any \(x_i\), the set \(\mathcal{K}^{RCBF}_i\), defined in \eqref{robustset}, is non-empty. Thus, for each state \(x_i\), there exists at least one control input \(u^*_i\) which satisfies the RCBF inequality:
\[
\mathcal{L}_{f_i} h_i(\mathbf{x}_i) + \mathcal{L}_{g_i} h_i(x_i)u^*_i - \sigma_i(\mathbf{x}_i,u^*_i) \ge -\alpha_i\bigl(h_i(x_i)\bigr).
\]
In addition, by construction, the compensation term satisfies
\[
\mathcal{L}_{\mu_i} h_i(\mathbf{x}_i,u^*_i) + \sigma_i(\mathbf{x}_i,u^*_i) \ge 0,
\]
then the total time derivative of \(h_i(x_i)\) along the closed-loop trajectories is
\[
\dot{h}_i(\mathbf{x}_i, r_i, u_i) \ge -\alpha_i\bigl(h_i(x_i)\bigr).
\]
It follows that the robust safety condition required for the forward invariance of the safe set \(C_i\) is satisfied by the closed-loop dynamics under the controller \(u_i^*\). By directly invoking Lemma~\ref{lem:RCBF_forward_invariance}, we conclude that the set \(C_i\) is forward invariant, implying that the chosen controller \(u_i^*\) renders the system robustly safe.
\hfill \qedhere
\end{proof}

\begin{corollary}\label{feas_rcbf}
Suppose Assumption~\ref{exitsigma} holds.
If the maximum control effort~$\bar u_i$ satisfies
\begin{equation*}\label{rcbf_bound}
\bar u_i \;\ge\; 
(1 - \bar{x}_i - r_i^0) \beta_{ii} 
+ (1 -  \bar{x}_i - r_i^0)\sum_{j=1}^n \beta_{ij}\frac{\bar{x}_j}{\bar{x}_i} 
- \gamma_i
+ \frac{\bar{\sigma_i}}{\bar{x}_i},
\end{equation*}
where \( r_i^0 = r_i(0) \) and $\bar{\sigma_i} = \max_{\mathbf{x}_i}\sigma_i(\mathbf{x}_i,u_i)$,
there will always exist an admissible input $u_i\in U_i=[0,\bar u_i]$ that guarantees node $i$ is robustly safe.
\end{corollary}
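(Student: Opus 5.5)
We follow the proof of Corollary~\ref{cor:feas} and Theorem~\ref{thm:forwardInvariance}: reduce robust safety to a worst-case inequality evaluated on the boundary of the node-level viability domain, and then show the stated bound on $\bar u_i$ makes that inequality hold. By Proposition~\ref{thm:existence_RCBF_controller} and Lemma~\ref{lem:RCBF_forward_invariance}, it suffices to show that $\mathcal{K}^{RCBF}_i$ in \eqref{robustset} is non-empty, that is, that some $u_i\in[0,\bar u_i]$ satisfies \eqref{robustcondition}; Assumption~\ref{exitsigma} supplies \eqref{lie_comp}. Since the left side of \eqref{robustcondition} is affine and increasing in $u_i$ with slope $x_i>0$ (recall $\sigma_i$ in \eqref{comp1} and \eqref{ncomp} does not depend on $u_i$), the best candidate is $u_i=\bar u_i$. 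Feasibility therefore reduces to checking \eqref{robustcondition} at $u_i=\bar u_i$.

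We first treat the boundary $x_i=\bar x_i$, where $\alpha_i(0)=0$. The required inequality is
\[
(\gamma_i+\bar u_i)\bar x_i-(1-\bar x_i-r_i)\Bigl(\beta_{ii}\bar x_i+\sum_{j\in\mathcal{N}_i^+}\beta_{ij}x_j\Bigr)-\sigma_i\ge 0 .
\]
\begin{itemize}
\item We bound the infection pressure from above by setting $x_j=\bar x_j$, which is the worst case allowed by Assumption~\ref{existanceNCBF}.
\item Since $r_i$ is nondecreasing by \eqref{eq:SIR_r}, we have $r_i\ge r_i^0$, so $1-\bar x_i-r_i\le 1-\bar x_i-r_i^0$.
\item We replace $\sigma_i$ by $\bar\sigma_i$.
\end{itemize}
Dividing by $\bar x_i>0$ then shows that the inequality follows exactly from the hypothesized bound on $\bar u_i$. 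The extra term $\bar\sigma_i/\bar x_i$ is the only change from Corollary~\ref{cor:feas}.

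For interior states $x_i<\bar x_i$, we argue as in the proof of Theorem~\ref{thm:forwardInvariance}. Forward invariance via Nagumo only requires the derivative condition at the boundary $h_i=0$, and the controller \eqref{finalu} (with $\sigma_i$ added to the numerator) is then admissible wherever the required value does not exceed $\bar u_i$. Adding \eqref{lie_comp} yields $\dot h_i\ge 0$ at the boundary under the true uncertain dynamics \eqref{dynamicuncertain}. This gives robust safety in the sense of Definition~\ref{robust_safety}.

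The main obstacle is the interior and the role of $\bar\sigma_i$. For \eqref{ncomp}, $\sigma_i$ grows as $x_i\to 0$, and the required $u_i$ from \eqref{robustcondition} scales like $\sigma_i/x_i$, which blows up near $x_i=0$. A uniform bound therefore cannot make $\mathcal{K}^{RCBF}_i$ non-empty for all $x_i\in(0,\bar x_i]$ unless $\alpha_i$ absorbs it. We handle this by interpreting the corollary as a boundary (viability) statement: $\bar\sigma_i$ is the maximum over the relevant states, which for \eqref{ncomp} is $\bar d_i/\sqrt{\delta}$, a finite value thanks to $\delta>0$. For interior states we invoke Assumption~\ref{assumption:RCBF_existence}, namely a suitable choice of $\alpha$, to guarantee non-emptiness, and we note that this is where the argument relies on assumption rather than on the bound itself.
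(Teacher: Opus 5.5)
Your proposal is correct and follows the argument the paper relies on. The paper gives no separate proof of this corollary; it is the robust analogue of Corollary~\ref{cor:feas} and of the boundary argument in Theorem~\ref{thm:forwardInvariance}: evaluate \eqref{robustcondition} at $x_i=\bar x_i$ with $u_i=\bar u_i$, $x_j\le\bar x_j$, $r_i\ge r_i^0$ and $\sigma_i\le\bar\sigma_i$, then divide by $\bar x_i$, where the only new ingredient is the $\bar\sigma_i/\bar x_i$ term. For interior states you do not need Assumption~\ref{assumption:RCBF_existence}: with $A(x)=(1-x-r_i^0)(\beta_{ii}x+\sum_{j\neq i}\beta_{ij}\bar x_j)-\gamma_i x+\bar\sigma_i$ and $\alpha_i(h)=kh$, choosing $k$ so that $A'(x)+k\ge\bar u_i$ on $[0,\bar x_i]$ makes $A(x)-k(\bar x_i-x)-\bar u_i x$ nondecreasing, so it is bounded by its value at $x=\bar x_i$, which is $\le 0$ by the hypothesized bound, and therefore $\bar u_i\in\mathcal{K}^{RCBF}_i$ for all $x_i\in(0,\bar x_i]$ because $\bar\sigma_i<\infty$.
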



\section{Experiments and Results}\label{secres}
\begin{figure*}[t]
    \centering
    \begin{subfigure}[t]{0.32\linewidth}
        \centering
        \includegraphics[width=\linewidth]{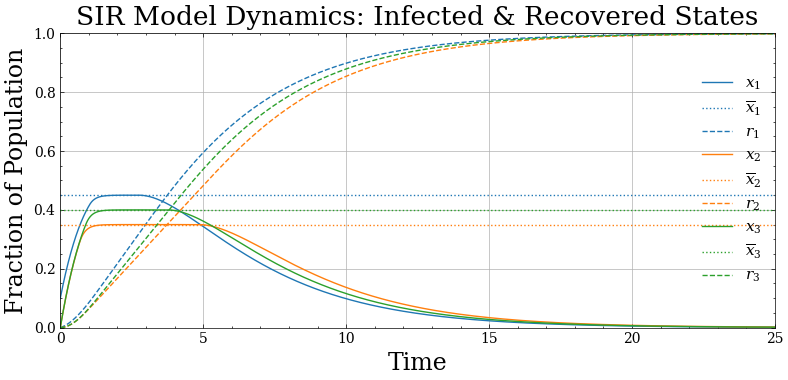}
        \caption{Nominal dynamics}
        \label{fig:state_dynamics}
    \end{subfigure}
    \hfill
    \begin{subfigure}[t]{0.32\linewidth}
        \centering
        \includegraphics[width=\linewidth]{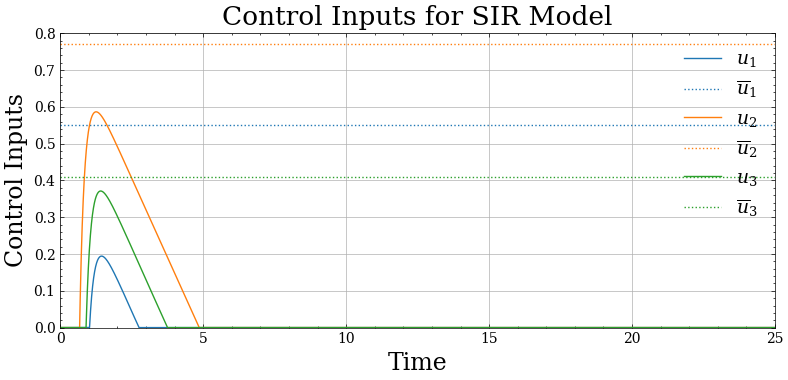}
        \caption{Control inputs under nominal conditions}
        \label{fig:control_inputs}
    \end{subfigure}
    \hfill
    \begin{subfigure}[t]{0.32\linewidth}
        \centering
        \includegraphics[width=\linewidth]{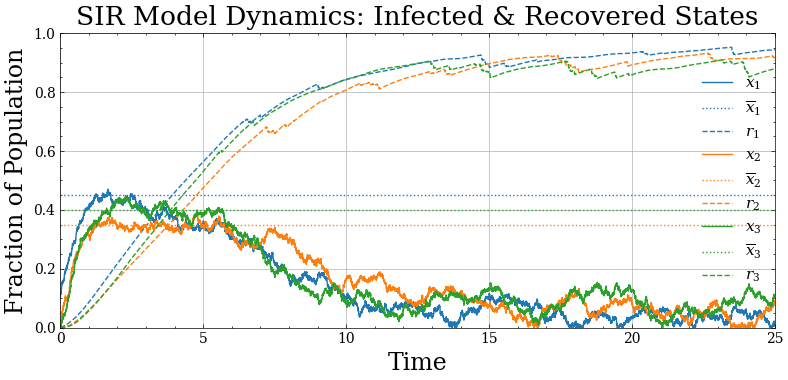}
        \caption{Noisy dynamics}
        \label{fig:state_dynamics_noise}
    \end{subfigure}
    \caption{Time evolution of the infected and recovered states in the networked SIR model under nominal conditions and with noise, along with the corresponding control inputs.}
    \label{fig:overall_three_figures}
\end{figure*}

We evaluate the proposed safety-critical control framework on a networked SIR model with \(n=3\) interconnected nodes. The dynamics at each node are governed by \eqref{eq:SIR}.
In our simulation, the network topology is defined by a set of directed edges such that each node influences every other node. 
The infection rate matrix \((B)\) is defined as 
\[
B = 0.55\,I_n + 0.45\,(1_n1_n^\top - I_n),
\]
which implies \(\beta_{ii}=0.55\) and \(\beta_{ij}=0.45\) for \(i\neq j\). The recovery rate is set uniformly as \(\gamma_i=0.3\) for all nodes. In our setup, the maximum acceptable infection thresholds \(\bar{x}_i = [0.45, 0.35, 0.4]\) while the maximum control inputs \(\bar{u}_i = [0.55, 0.77, 0.41]\), which are random numbers between \(0\) and \(1\).

The simulation is conducted over the time interval \(t\in[0,T]\) with a time step of \(dt=10^{-4}\) and $T=25$. The initial conditions are chosen so that node \(i=1\) starts with an infection level of \(x_1(0) = 1-s_1(0)=0.1\) (with \(s_1(0)=0.9\)) while the other nodes begin with zero infection. At each time step, a safety-critical control input \(u_i\) is computed at every node in accordance with the CBF condition~\eqref{finalu}, to ensure that the infection level remains below the threshold~\(\bar{x}_i\).

\subsection{CBF under Nominal Conditions}
We simulate the CBF where the safety-critical controller is computed based on the barrier function condition \eqref{safety} and ensures that the infection level \(x_i\) remains below the threshold \(\bar{x}_i\) in the absence of uncertainty.
When a node's infection level begins to approach or exceed its safety threshold, the corresponding control input increases within its allowable range to enhance the healing rate and push the state back into the safe region. 
All nodes' infected and recovered states are depicted in Figure~\ref{fig:state_dynamics}, where the infection levels are kept below their safety criteria. The control inputs applied to each node over time are shown in Figure~\ref{fig:control_inputs}, which remain within the prescribed limits.
The simulation results (Figures~\ref{fig:state_dynamics} and \ref{fig:control_inputs}) demonstrate that the nominal CBF controller effectively limits the infection levels below their safety thresholds while activating control inputs only when needed. These observations indicate that, in the absence of disturbances, the controller efficiently balances the trade-off between minimal intervention and epidemic containment.

We also examine the CBF under the disturbance condition by applying i.i.d. Gaussian process noise with a variance of \(1.0\). Figure~\ref{fig:state_dynamics_noise} shows that the fractions of infected and recovered individuals in all nodes fluctuate within their allowable ranges. It also demonstrates that occasionally, node infections exceed their thresholds due to the presence of noise. These results indicate that the CBF is not reliable under uncertainty, particularly if the noise scale increases or the maximum safe infection threshold decreases. To address this issue, we apply robust CBF methods to our noisy network.

\begin{figure*}[t]
    \centering
    \begin{subfigure}[t]{0.48\linewidth}
        \centering
        \includegraphics[width=\linewidth]{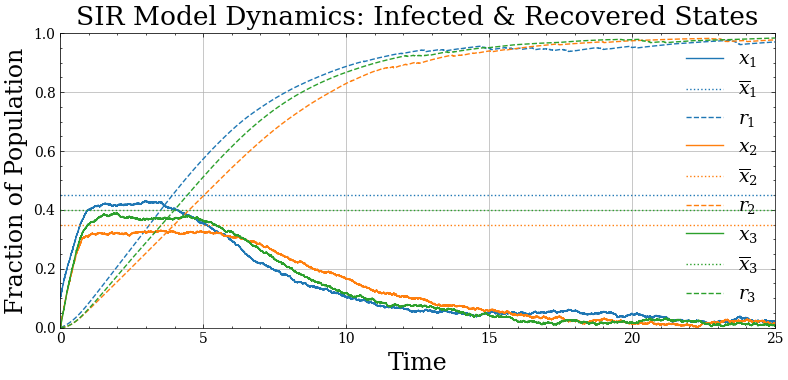}
        \caption{Dynamics with Independent Uncertainty}
        \label{fig:state_dynamics_independent_noise}
    \end{subfigure}
    \hfill
    \begin{subfigure}[t]{0.48\linewidth}
        \centering
        \includegraphics[width=\linewidth]{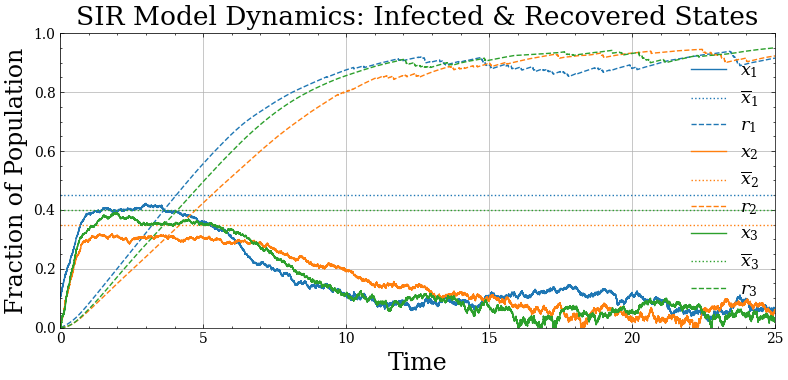}
        \caption{Dynamics with Low-Prevalence--Amplified Uncertainty}
        \label{fig:lowprev_noise}
    \end{subfigure}
    \hfill
    \vspace{1em}
    \begin{subfigure}[t]{0.48\linewidth}
        \centering
        \includegraphics[width=\linewidth]{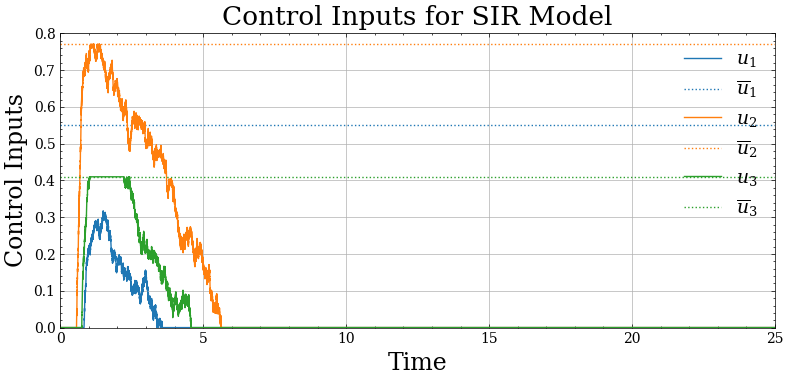}
        \caption{Control input (Independent Uncertainty)}
        \label{fig:input_independent_noise}
    \end{subfigure}
    \hfill
    \begin{subfigure}[t]{0.48\linewidth}
        \centering
        \includegraphics[width=\linewidth]{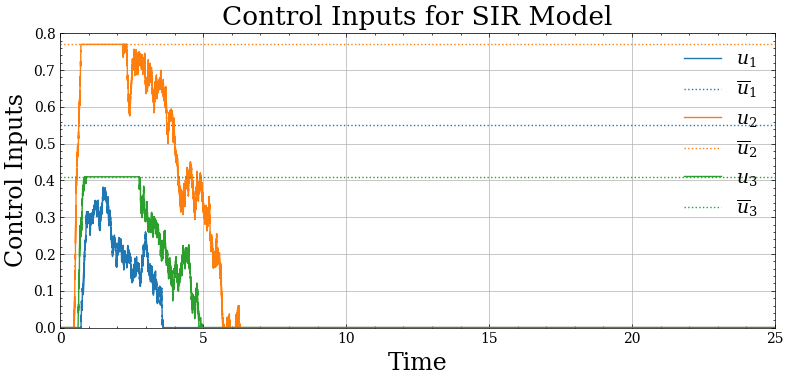}
        \caption{Control input (Low-Prevalence--Amplified Uncertainty)}
        \label{fig:state_input_low_noise}
    \end{subfigure}
    \hfill
    
    \caption{Time evolution of the infected and recovered states and control inputs in the noisy networked SIR model under independent and low-prevalence--amplified uncertainty.}
    \label{fig:overall_noise}
\end{figure*}

\subsection{RCBF under Uncertainties}
We implement the robust CBF framework by robustifying the barrier condition through a compensation term \(\sigma_i(\mathbf{x}_i,u_i)\) as described in Section~\ref{sec: robust}. Two choices for the compensation term are considered: \textit{independent}~\eqref{comp1} and \textit{low-prevalence--amplified}~\eqref{ncomp}.

To evaluate our robust framework, we examine how each compensation term choice between \eqref{comp1} and \eqref{ncomp} performs under various noise structures reflecting different uncertainty models. Specifically, we consider: (i) \emph{independent} uncertainty, modeled by additive Gaussian noise $\mathcal{N}(0,1)$, which we bound by setting the upper uncertainty level to $\bar{\mu}_i = 0.15$, independent of both \(x_i\) and \(u_i\); (ii) \emph{low-prevalence--amplified} uncertainty, which scales with the infection level so that \(\mu_i(\mathbf{x}_i,u_i) = \frac{\epsilon_i}{\sqrt{(x_i+\delta)}}\), where \(\epsilon_i \sim \mathcal{N}(0, 1)\), $\delta = 0.01$, and the bound on the disturbance term is set to $\bar{d}_i = 0.15$. 

\begin{table}[b]
    \centering
    \caption{Maximum infection and control input for each node under two approaches.}
    \label{tab:peakvals}
    \begin{tabular}{c|cc|cc}
      \hline
      \multirow{2}{*}{\textbf{Node}} 
      & \multicolumn{2}{c|}{\textbf{Independent}}
      & \multicolumn{2}{c}{\textbf{Low-Prevalence--Amplified}} \\
      \cline{2-5}
      & $x^{\max}_i$ & $u^{\max}_i$ & $x^{\max}_i$ & $u^{\max}_i$\\
      \hline
      Node 1 & 0.426 & 0.316 & 0.422 & 0.382 \\
      Node 2 & 0.329 & 0.770 & 0.319 & 0.770 \\
      Node 3 & 0.396 & 0.410 & 0.391 & 0.410 \\
      \hline
    \end{tabular}
\end{table}

\begin{table}[b!]
    \centering
    \caption{Aggregate results across the entire network.}
    \label{tab:aggregatevals}
    \begin{tabular}{l|c|c}
      \hline
      \textbf{Method} & \textbf{Avg. Min Safety Margin} & \textbf{Integrated Control}\\
      \hline
      Independent     & 0.016 & 3.311 \\
      \shortstack{Low-Prevalence} & 0.023 & 4.858 \\
      \hline
    \end{tabular}
\end{table}

Figure~\ref{fig:overall_noise} compares the independent and low-prevalence--amplified compensation strategies applied to the noisy networked SIR model. As shown in Figures~\ref{fig:state_dynamics_independent_noise} and \ref{fig:input_independent_noise} for the independent approach, and Figures~\ref{fig:lowprev_noise} and \ref{fig:state_input_low_noise} for the low-prevalence--amplified approach, both RCBF controllers effectively maintain all infection trajectories below their respective thresholds, though their operational characteristics differ. The independent strategy assumes a constant worst-case disturbance bound~\eqref{func_independent}, resulting in relatively consistent control interventions throughout the simulation horizon. In contrast, the low-prevalence formulation~\eqref{ncomp} allocates greater compensation when infection levels are low, resulting in stronger early actions that suppress incipient outbreaks and create additional margin against noise, but at the expense of greater overall input effort. Additional distinctions are evident in the dynamics: for instance, the low-prevalence--amplified method exhibits lower peak infection levels across nodes but incurs higher cumulative control effort. From these comparisons, we learn that the choice between strategies involves a fundamental trade-off between control efficiency (favoring the independent approach in scenarios with stable disturbance levels) and enhanced conservatism (favoring the low-prevalence--amplified approach in high-uncertainty, low-infection regimes), allowing decision-makers to select based on the specific epidemic context and resource constraints.

Table~\ref{tab:peakvals} summarizes the highest infection levels $(\max_{t \geq 0} x_i(t))$ and the maximum control actions \((\max_{t \geq 0} u_i(t))\) at each node under the two approaches. Table~\ref{tab:aggregatevals} aggregates the results across the network. The `Avg. Min Safety Margin' column shows the average of the minimum difference between the infection threshold and the infection level over time, across all nodes, computed as $\frac{1}{n} \sum_{i=1}^n \min_{t \geq 0} (\bar{x}_i - x_i(t))$; higher values indicate greater conservatism in maintaining safety buffers. The `Integrated Control' column quantifies the total control effort across the network as $\int_0^T \sum_{i=1}^n u_i(t) \, dt$. These tables confirm the trade-off observed in Figure~\ref{fig:overall_noise}: the independent strategy uses less cumulative effort but operates closer to the safety boundary, while the low-prevalence strategy uses more resources to guarantee a larger safety cushion.

\section{Conclusion and Future Work} \label{secfin}
We presented a safety-critical control framework for a networked SIR epidemic model using CBFs and RCBFs. By treating infection-level constraints as barrier functions and incorporating noise through a compensation term \((\sigma_i)\), our approach offers a systematic method to keep infections below preset thresholds despite measurement noise and model uncertainties. Simulation results demonstrate that our novel low-prevalence--amplified formulation is more conservative than the independent approach, allocating stronger interventions when infection levels are small to maintain a larger safety margin, while the independent approach is less conservative and therefore more control-efficient, keeping trajectories safe but operating closer to the prescribed thresholds. These robust methods are essential under higher uncertainty or structured disturbances, where nominal CBFs fail to provide safety guarantees. 
Future work will investigate realistic and time-varying disturbance models, including reporting delays.

	\bibliographystyle{IEEEtran}
	\bibliography{references}
\end{document}